\documentclass[journals]{IEEEtran}
\usepackage{cite}
\usepackage{amsmath,amssymb,amsfonts}
\usepackage{algorithmic}
\usepackage{graphicx}
\usepackage{amsthm}
\usepackage{epsfig}
\usepackage{color}
\usepackage{lettrine}
\usepackage{float}
\usepackage{lipsum}
\usepackage{mathtools}
\usepackage{bbm}
\usepackage{suffix}
\usepackage[T1]{fontenc}
\usepackage[colorlinks]{hyperref} 
\usepackage[nameinlink,noabbrev]{cleveref}
\usepackage[colorinlistoftodos]{todonotes}

\usepackage{textcomp}
\def\BibTeX{{\rm B\kern-.05em{\sc i\kern-.025em b}\kern-.08em
    T\kern-.1667em\lower.7ex\hbox{E}\kern-.125emX}}

\newtheorem{mydef}{Definition}
\newtheorem{lemma}{Lemma}
\newtheorem{theorem}{Theorem}

\newcommand{\src}{$ \mathbbmss{A} $}
\newcommand{\uav}{$ \mathbbmss{U} $}
\newcommand{\des}{$ \mathbbmss{B} $}
\newcommand{\eve}{$ \mathbbmss{E} $}
\newcommand{\kkk}{$ \mathbbmss{K} $}

\newcommand{\E}{\ensuremath{\mathbb E}}

\def \treq {\stackrel{\tiny \Delta}{=}}

\DeclarePairedDelimiterX\MeijerM[3]{\lparen}{\rparen}%
{\begin{smallmatrix}#1 \\ #2\end{smallmatrix}\delimsize\vert\,#3}

\newcommand\MeijerG[8][]{%
	G^{\,#2,#3}_{#4,#5}\MeijerM[#1]{#6}{#7}{#8}}

\WithSuffix\newcommand\MeijerG*[7]{%
	G^{\,#1,#2}_{#3,#4}\MeijerM*{#5}{#6}{#7}}

\makeatletter
\def\@seccntformat#1{\@ifundefined{#1@cntformat}%
	{\csname the#1\endcsname\quad}
	{\csname #1@cntformat\endcsname}
	}
\makeatother

\begin{document}





\title{On the Performance of Low-Altitude UAV-Enabled Secure AF Relaying with Cooperative Jamming and SWIPT}
\author{
	
\IEEEauthorblockN{
Milad Tatar Mamaghani\IEEEauthorrefmark{1}, Yi Hong\IEEEauthorrefmark{1}, ~\IEEEmembership{Senior Member,~IEEE}}

\IEEEauthorblockA{
\IEEEauthorrefmark{1}Electrical and Computer Science Engineering Department, Monash University, Melbourne, Australia}
}
	
\date{2019-06-10}

\maketitle
\markboth{}{}


{\vspace{-7mm}}


\begin{abstract}
This paper proposes a novel cooperative secure unmanned aerial vehicle (UAV) aided transmission protocol, where a source (Alice) sends confidential information to a destination (Bob) via an energy-constrained UAV-mounted amplify-and-forward (AF) relay in the presence of a ground eavesdropper (Eve). We adopt destination-assisted cooperative jamming (CJ) as well as simultaneous wireless information and power transfer (SWIPT) at the UAV-mounted relay to enhance physical-layer security (PLS) and transmission reliability. Assuming a low altitude UAV, we derive connection probability (CP), secrecy outage probability (SOP), instantaneous secrecy rate, and average secrecy rate (ASR) of the proposed protocol over Air-Ground (AG) channels, which are modeled as Rician fading with elevation-angel dependent parameters. By simulations, we verify our theoretical results and demonstrate significant performance improvement of our protocol, when compared to conventional transmission protocol with ground relaying and UAV-based transmission protocol without destination-assisted jamming. Finally, we evaluate the impacts of different system parameters and different UAV's locations on the proposed protocol in terms of ASR.
\end{abstract}

\begin{IEEEkeywords}
UAV relaying, wireless information and power transfer, physical layer security, jamming.
\end{IEEEkeywords}


\maketitle

\section{Introduction} \label{sec:introduction}
\lettrine[lines=2]{U}{nmanned} aerial vehicle (UAV) based wireless communications has recently attracted significant research attentions, since it is envisioned to play a paramount role in establishing and/or improving ubiquitous and seamless connectivity of communication devices as well as enhancing capacity of future wireless networks \cite{Zeng2016a,Hayat2016,Cheng2019,Li2018,Liu2019}.

UAVs have been introduced as aerial relays (see \cite{Choi2014, Zeng2016, Zeng2018,Hua2018} and references therein) in support of long-distance data transmissions from source to destination in heavily shadowed environments and/or highly overloaded scenarios.  
Typically, due to UAVs mobility, UAVs require a sufficiently high energy resources, which can be supported via energy harvesting techniques such as wireless energy harvesting (WEH) \cite{Lu2015} as well as simultaneous wireless information and power transfer (SWIPT) \cite{PonnimbadugePerera2018}. WEH harvests energy in a controlled manner from the ambient radio-frequency (RF) signals, while SWIPT not only captures information signals, but also harvests energy of the same signals concurrently \cite{PonnimbadugePerera2018,Yang2018}. Specifically, a power splitting (PS) architecture is required to divide the received signal into two separate streams of different power levels, one for signal processing and the other for simultaneous energy harvesting \cite{Wu2017}.

One technical challenge of UAV-assisted communications is to guarantee physical layer security. The unique characteristics of Air-Ground (AG) channels can provide good channel condition for legitimate nodes, but, on the other hand, is prone to eavesdropping by non-legitimate nodes \cite{Zou2016,Wu2019}. Exploiting PLS techniques in UAV-assisted communications has been studied in \cite{Wang2017,Cai2019,Wang,Sun2019} (see references therein). In \cite{Wang2017}, the authors have studied PLS of a UAV-enabled mobile relaying scheme over non-fading AG channels, and showed that moving buffer-aided relay provides significant performance over static relaying in terms of secrecy rate. In \cite{Cai2019}, the authors studied the resource allocation and path-planning problem for energy-efficient secure transmission from a UAV base station to multiple users in the presence of a passive ground eavesdropper. In \cite{Wang}, employing UAV as a friendly jammer to enhance PLS of a  ground-relaying based communication has been studied. In \cite{Sun2019}, the authors have examined SWIPT-enabled secure transmission of millimeter wave (mmWave) for a UAV relay network, where the UAV feeds the energy-constrained IoT destination device in the presence of multiple ground eavesdroppers while a free-space path loss (PL) model for AG links was adopted. 

In this paper, we consider a practical scenario where a low-altitude UAV relaying is employed to assist communications between source and destination nodes. Based on the recent channel measurements in \cite{Amorim2017,Khuwaja2018}, low-altitude UAV relay channels may also suffer from small scale fading compared to high-altitude cases. Hence, we assume that AG channel models are Rician fading of different parameters. Then, we tackle the aforementioned security and energy limitation challenges and we make the following contributions.

\begin{itemize}
    \item 
   We propose a secure and energy-efficient transmission protocol, where a source sends confidential information to a destination via an energy-constrained UAV-mounted amplify-and-forward (AF) relay in the presence of a passive eavesdropper. In the protocol, we adopt the destination-assisted cooperative jamming (CJ) and SWIPT techniques at the UAV based relay for PLS improvements as well as energy harvesting. 
    \item
  We analyze the proposed secure transmission protocol in terms of reliability and security. In particular, we derive connection probability, secrecy outage probability, instantaneous secrecy rate, and average secrecy rate of the proposed protocol from source to destination via relay.  
    \item
    We conduct simulations to $i)$ verify our theoretical results, $ii)$ identify the best location of the UAV based relay that provides the best average secrecy rate, $iii)$ evaluate impacts of different system parameters on the system performance in terms of reliability and security, and finally $iv)$ validate the effectiveness and improvements of the proposed protocol, when compared to convectional transmission protocol with ground relaying and UAV-based transmission protocol without destination-assisted jamming. 
\end{itemize}

The rest of this paper is organized as below. Section II presents system model and channel model. In Section III, we propose the UAV-based transmission protocol with destination-assisted jamming and SWIPT techniques. In Section IV, we conduct performance analysis. Numerical results are given in Section V, and finally, conclusions are drawn in Section VI.

\section{System Model and Channel Model}
In this section, we introduce system model and channel model.

\subsection{System Model}

\begin{figure}[t]
\center{\includegraphics[width=\columnwidth ]{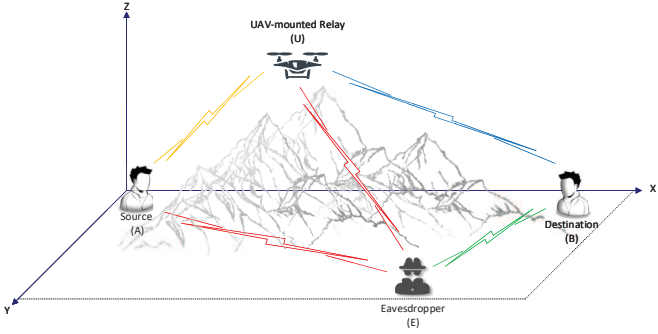}}
	\caption{\label{fig0} UAV-mounted low-altitude secure relaying communication based on destination assisted jamming and SWIPT}                                         
\end{figure}


We consider a point-to-point secure transmission scheme (see Fig. \ref{fig0}), where we employ a UAV-mounted relay (\uav) to assist  confidential transmission from a source node~\src~to a legitimate destination node~\des~over heavily shadowed areas in the presence of a ground passive eavesdropper~\eve. The locations of source, destination, and eavesdropper are fixed on the ground with their 3D coordination: $W_A=(0, 0, 0)$, $W_B=(D_x, 0, 0)$, $W_E=(E_x, E_y, 0)$, while UAV is located at $W_U=(U_x, U_y, H)$ and $H$ is its altitude from ground surface.

We assume that all nodes with a single antenna operate in a half-duplex mode and the UAV relay node (\uav) adopts AF protocol. Further, we assume the relaying node (\uav) uses simultaneous wireless information and power transfer (SWIPT) technology to harvest energy from the received radio frequency (RF) signals transmitted by~\src~and~\des, while it also uses its on-board battery for maneuvering and staying stationary in the sky. 
Finally, we assume that the UAV  receiver adopts power splitting architecture (see Fig.~\ref{fig1}), where $\beta$ ($0\leq\beta\leq1$) is the power splitting ratio (PSR) identifying the portion of the harvested power from the received RF signals~and $1-\beta$ denotes the PSR for signal processing at the AF relay. 



\begin{figure}[t]
	\center{\includegraphics[width=0.7\columnwidth ]{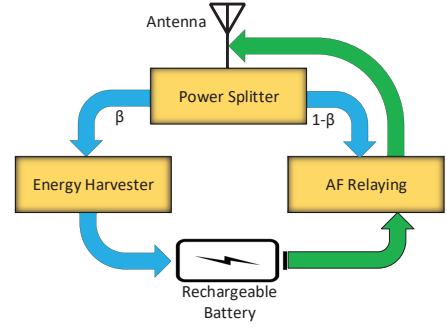}}
	\caption{\label{fig1}Power Splitting Structure for the SWIPT-enabled relaying at UAV}                                         
\end{figure}

\subsection{Channel Model}
Here we consider both {\em small-scale fading} and {\em large-scale PL} in setting up the channel model. 
We assume the channel between~\uav~and the ground node $\mathbbmss{G}$~$\in$ \{\src,~\des,~\eve\} is {\em Rician fading} but with different values of Rice parameters.
Exploiting channel reciprocity, the normalized channel power gain between the links~\src$\Leftrightarrow$\uav, ~\uav$\Leftrightarrow$\des,~and~\uav$\Leftrightarrow$\eve~denoted by
\begin{equation}\label{Def_S}
S_{ij}\treq|h_{ij}|^2~~~ij \in \{au, ub, ue\}
\end{equation} 
follow a square Rice distribution (i.e., non-central chi-square (nc-$\chi^2$) distribution) with two degrees of freedom, corresponding to line-of-sight (LOS) and non-line-of-sight (NLOS) components whose probability density function (PDF) and cumulative distribution function (CDF) are
\begin{align}\label{pdf_ncxhi2}
f_{{ij}}(x)&={(K_{ij}+1)} e^{-K_{ij}}\exp\Big(-{(K_{ij}+1)x}\Big)\nonumber\\
&\times\mathrm{I}_0\Big(2\sqrt{{K_{ij}(K_{ij}+1)x}}\Big),
\end{align}
and
\begin{align}
F_{ij}(x)=1-\mathrm{Q}\left(\sqrt{2K_{ij}}, \sqrt{2(1+K_{ij})x}\right),
\end{align}
where $x$ holds any non-negative value, $f_{ij}(\cdot)$ and $F_{ij}(\cdot)$ denote the PDF and the CDF for the link, 
and $\mathrm{I}_0(\cdot)$ represents the modified Bessel function of the first kind and zero-order, $K_{ij}$ (in dB) is the \emph{K-factor} given by \cite{Azari2018}
\begin{align}\label{Def_K}
K_{ij}(\theta_{ij})=\kappa_m+\left(\kappa_M-\kappa_m\right) \frac{2\theta_{ij}}{\pi},
\end{align} 
where $\kappa_M$ and $\kappa_m$ are two constants depending on the environment and transmission frequency, $\theta_{ij}$ (in radian) is the elevation angle between two given nodes. 

Furthermore, we assume the channel model between ground nodes is {\em Rayleigh fading}, {\em a special case of Rician fading with} $K=0$, and thus the channel power gain for the links \src$\Leftrightarrow$\eve, $S_{ae}\treq|h_{ae}|^2$, and \des$\Leftrightarrow$\eve, $S_{be}\treq|h_{be}|^2$ can be modeled as exponential distribution with unit scale parameters.

For large scale PL, considering the probability of LOS \cite{Zhu2018}, we adopt elevation-angle dependent PL component  
\begin{align}
\alpha_{ij}(\theta_{ij})= \frac{\alpha_L-\alpha_N}{1+\omega_1\exp\left(-\omega_2\left(\theta_{ij}-\omega_1\right)\right)}+\alpha_N,
\end{align}
where $\alpha_L$ and $\alpha_N$ represent PL exponents for LOS links between two nodes ($\theta_{ij}=\frac{\pi}{2}$), and NLOS links ($\theta_{ij}=0$), respectively, and $\omega_1$ and $\omega_2$ are environmental constants. Letting $d_{ij} \treq \| W_i -W_j \|$ be the Euclidean distance between two nodes, we let $L_{ij} (\alpha_{ij}, d_{ij})\treq d^{-\alpha_{ij}}_{ij}$ as the PL model. Note that for G2G links with $\theta_{ij}=0$, the PL model follows only NLOS components, however, for AG links the PL component is between $\alpha_L$ and $\alpha_N$ depending on the elevation-angel between nodes.

\section{Proposed Transmission Protocol}
\begin{figure}[t]
	\center{\includegraphics[ width=0.8\columnwidth ]{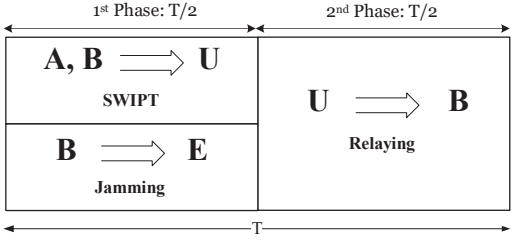} } 
	\caption{\label{fig2}Diagram of the proposed secure UAV-enabled relaying with SWIPT}  
\end{figure}
We consider the following two equal-duration phase secure transmission protocol with an overall duration $T$ seconds (Fig. \ref{fig2}). During the first time slot,~\src~sends information signal to~\uav~and simultaneously,~\des~transmits a {\em jamming signal} to degrade the wiretap channel of~\eve~as well as to assist the energy harvesting of UAV. Then the AF relay~\uav~receives 
\begin{align}
y_{u}&=\sqrt{(1-\beta)P_a L_{au}}h_{au} x_a +\sqrt{(1-\beta)P_b L_{bu}} h_{bu} x_b \nonumber\\
&+ \sqrt{(1-\beta)}n_u+n_p,
\end{align}
where $x_a$ and $x_b$ denote the normalized information signal from~\src, and jamming signal from~\des, i.e.,
\[ \E\{\|x_a\|^2\}=\E\{\|x_b\|^2\}=1,\] and $P_a$ and $P_b$ represent transmit power from~\src~and jamming power from~\des, which satisfy
\begin{equation}\label{Eq:toalpower}
   P_a+P_b=P,
\end{equation}
and $P$ is fixed for each frame.
Further, we assume
\begin{equation}\label{Eq:powerfraction}
P_a = \lambda P~~~P_b = (1-\lambda) P
\end{equation}
where $0<\lambda<1$ is the power allocation factor. Besides, $n_p$ represents signal processing noise at the power splitting component with power $N_p$, and $n_u \backsim {\cal N}(0,N_0)$ is the Additive white Gaussian noise (AWGN) at~\uav. The energy harvested by~\uav~from the received signals can be written as 
\begin{align}
E_{H}=\varepsilon \beta (P_a L_{au} S_{au} + P_b L_{bu} S_{bu} + N_0)T/2,
\end{align}
where $\varepsilon$ is the power conversion efficiency factor for the harvester. Here, we assume that the total harvested energy during the first phase will be used for signal transmission in the second phase and is given by
\begin{align}\label{pu}
P_u=\varepsilon \beta (P_a S_{au} +P_b S_{bu}+N_0).
\end{align}

Different from \cite{Wang2017} where no direct link was assumed to exists from~\src~to~\eve, here we consider a more general scenario during which~\eve~may also overhear the confidential messages from~\src~to~\uav~due to broadcast nature of wireless media. Specifically, if~\eve~is located on the ground that is in not so far from~\src,~\eve~can attempt to decode the received signal information based on the received signal-to-interference-plus-noise ratio (SINR)
\begin{align}\label{gammae1}
\gamma_E^{(1)}=\frac{P_a S_{ae}L_{ae}}{P_b S_{be}L_{be}+N_0}.
\end{align}
In the second phase,~\uav~forwards the scaled version of $x_u=Gy_{u}$ to~\des~with the amplification factor 
\begin{align}\label{gain}                    
G&=\sqrt{\frac{P_u}{(1-\beta)(P_aS_{au}+P_bS_{bu}+N_0)+N_p}},
\end{align}
where $P_u$ is in \eqref{pu}. The resultant signal at the node~\kkk~$\in$\{\eve, \des\} can be expressed as
\begin{align}\label{yk}
\tilde{y}_k&=\stackrel{}{\underset{\text{Information signal component}}{\underbrace{{G\sqrt{(1-\beta)P_a}h_{au}h_{uk} x_a}}}} \nonumber\\
&+\stackrel{}{\underset{\text{Destination jamming interference}}{\underbrace{{G\sqrt{(1-\beta)P_b} h_{bu}h_{uk} x_b}}}} \nonumber\\
&+ \stackrel{}{\underset{\text{Noise}}{\underbrace{{G(\sqrt{(1-\beta)}n_u+n_p)h_{uk}+n_k}}}},
\end{align}
Owning to the fact that~\des~is assumed to be able to conduct full self-interference cancellation, hence the term of jamming interference can be canceled from \eqref{yk}, whereas~\eve~acts with this part as an interference. Therefore, the received SINR at~\des~and~\eve~can be obtained as
\begin{align}\label{gammaAB}
\gamma_{A\mapsto B}&=\frac{\varepsilon\beta(1-\beta)P_aS_{au}S_{ub}L_{au}L_{ub}}{\varepsilon\beta(1-\beta+\zeta) S_{ub}L_{ub}N_0+(1-\beta)N_0+\epsilon},
\end{align}
and
\begin{align}\label{gammae2}
\gamma_E^{(2)}=\frac{\varepsilon\beta(1-\beta)P_aS_{au}S_{ue}L_{au}L_{ue}}{
	\splitfrac{\varepsilon\beta(1-\beta)P_bS_{bu}S_{ue}L_{bu}L_{ue} +(1-\beta)N_0}{
	+\varepsilon\beta\left(1-\beta+\zeta\right)S_{ue}L_{ue}N_0+\epsilon}},
\end{align}
where 
\begin{equation}\label{Noise_Ratio}
    \zeta\treq\frac{N_p}{N_0}~~~~\epsilon\treq\frac{ N_pN_0}{P_a S_{au}+P_bS_{ub}}.
\end{equation}

For the rest of the paper, we assume $\epsilon=0$ for simplicity of the result. This assumption holds for moderate/high signal-to-noise ratios (SNRs). Consequently, for the wiretap link, the total SINR at~\eve, denoted by $\gamma_E$, is given by
\begin{align}\label{gammaE}
\gamma_E=\max\{\gamma_E^{(1)}, \gamma_E^{(2)}\},
\end{align} 
which are given in \eqref{gammae1} and \eqref{gammae2}, respectively.

\section{Performance Analysis}
In this section, we derive connection probability, secrecy outage probability, instantaneous secrecy rate, as well as achievable average secrecy rate of the proposed transmission protocol. In the derivations, we assume that the channel coefficients between nodes remain constants during each frame and vary from one
frame to next independently.

\subsection{Connection Probability}
\begin{mydef}
The connection probability (CP), i.e., the probability that~\des~is able to decode the transmitted signal from~\src~and correctly extract the secure information messages \cite{Mamaghani2019}, is defined as 
\begin{align}\label{pc_def}
P_c\treq\Pr\{C_M>R_t\},
\end{align}
where 
\begin{equation}\label{Def:CapacityMainLink}
C_{M}=\frac{1}{2}\log_2(1+\gamma_{A\mapsto B}),
\end{equation}
and $R_t$ denote the instantaneous capacity and transmission rate of~\src-\des~via~\uav, which is normalized by bandwidth as \cite{Laneman2004}, where $\gamma_{A\mapsto B}$ is given by (\ref{gammaAB}). 
\end{mydef}

The following theorem provides an analytical closed-form expression of $P_c$. 
\begin{theorem}\label{pc}
We derive the CP of the secure UAV-based relaying in \eqref{pc_close}, where $\delta_t\treq2^{2R_t}-1$, $D$ and $R$ are two positive integers controlling the accuracy of \eqref{pc_close}, $\Gamma(\cdot)$ represents the gamma function, $\zeta$ is given in (\ref{Noise_Ratio}), $d,u,s,r$ are dummy variables, $K_{ij}$, $ij\in\{au,ub\}$ is given in (\ref{Def_K}), and $\mathrm{I}_{\nu}(\cdot)$ is the modified Bessel function of the first kind with the order of $\nu$.
\begin{figure*}[t]
	\begin{align}\label{pc_close}
P_c&=2(1+K_{ub})e^{-K_{au}-K_{ub}}\exp\left(-\frac{(1+K_{au})(1-\beta+\zeta)N_0\delta_t}{(1-\beta)P_a L_{au}}\right)\nonumber\\
&\times\sum_{d=0}^{D} \sum_{u=0}^{d} \sum_{s=0}^{u} \sum_{r=0}^{R} \frac{\Gamma (D+d) D^{1-2 d}  \Gamma (R+r) R^{1-2r} }{\Gamma (D-d+1)\Gamma(d+1)\Gamma(u-s+1)\Gamma(s+1)}\nonumber\\
&\times K_{au}^d (K_{au}+1)^u K_{ub}^r (1+K_{ub})^{\frac{r+u-s-1}{2}}
 \left({\frac{(1-\beta+\zeta)N_0\delta_t}{(1-\beta)P_a L_{au}}}\right)^s \left({\frac{N_0 \delta_t}{\varepsilon \beta P_a L_{au} L_{ub}}}\right)^{u-s} \nonumber\\
 &\times\mathrm{I}_{r+s-u+1}\left(2\sqrt{{\frac{(1+K_{au})(1+K_{ub})N_0 \delta_t}{\varepsilon \beta P_a L_{au} L_{ub}}}}\right),
\end{align}
\noindent\rule{\textwidth}{.5pt}
\end{figure*}

\begin{proof}
See Appendix \ref{Appendix A}.
\end{proof}
\end{theorem}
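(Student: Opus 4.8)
The plan is to turn the rate constraint into a threshold on the end-to-end SINR, reduce the resulting two-variable probability to a single integral over one channel gain, and then evaluate that integral by expanding the special functions that appear into convergent series. First I would observe that $C_M>R_t$ in \eqref{Def:CapacityMainLink} is equivalent to $\gamma_{A\mapsto B}>\delta_t$ with $\delta_t\treq 2^{2R_t}-1$, so $P_c=\Pr\{\gamma_{A\mapsto B}>\delta_t\}$. Setting $\epsilon=0$ in \eqref{gammaAB} and writing $X\treq S_{au}$, $Y\treq S_{ub}$, the event is affine in $X$ once $Y$ is fixed, so I would solve it explicitly to get $X>a+b/Y$, where $a\treq\frac{(1-\beta+\zeta)N_0\delta_t}{(1-\beta)P_aL_{au}}$ and $b\treq\frac{N_0\delta_t}{\varepsilon\beta P_aL_{au}L_{ub}}$ are exactly the two groups of constants appearing in \eqref{pc_close}. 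Since $X$ and $Y$ are independent, conditioning on $Y=y$ and invoking the CDF of the excerpt gives
\[P_c=\int_0^\infty\!\bigl[1-F_{au}(a+b/y)\bigr]f_{ub}(y)\,dy=\int_0^\infty\!\mathrm{Q}\!\left(\sqrt{2K_{au}},\sqrt{2(1+K_{au})(a+b/y)}\right)f_{ub}(y)\,dy.\]

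Next I would linearize the two special functions in the integrand. For the complementary CDF of $X$ I would use the series representation of the squared-Rice CCDF, which factors as $e^{-K_{au}}$ times a single exponential $e^{-(1+K_{au})(a+b/y)}$ times a polynomial in $(a+b/y)$; truncating it at order $D$ with the weights $\Gamma(D+d)D^{1-2d}/[\Gamma(D-d+1)\Gamma(d+1)]$ (which tend to $1/d!$ as $D\to\infty$, so $D$ controls accuracy) produces the $d,u$ sums together with the $K_{au}^d(1+K_{au})^u$ factors, while a binomial expansion of $(a+b/y)^u$ produces the $s$ sum, the $\Gamma(u-s+1)\Gamma(s+1)$ denominator, and the powers $a^s b^{u-s}$ in \eqref{pc_close}. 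Crucially the exponential splits as $e^{-(1+K_{au})a}\,e^{-(1+K_{au})b/y}$: the first factor pulls out of the integral and matches the leading exponential of \eqref{pc_close}, while the second stays inside. For the PDF $f_{ub}$ of \eqref{pdf_ncxhi2} I would similarly expand the Bessel factor $\mathrm{I}_0\!\bigl(2\sqrt{K_{ub}(1+K_{ub})y}\bigr)$, truncated at order $R$, which supplies the $r$ sum, the prefactor $(1+K_{ub})e^{-K_{ub}}$, and the $K_{ub}^r$ terms.

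After interchanging the (now finite) sums with the integral, each term reduces to a two-sided exponential integral of the form $\int_0^\infty y^{\,r-(u-s)}\exp(-p/y-qy)\,dy$ with $p=(1+K_{au})b$ and $q=1+K_{ub}$, which I would evaluate in closed form by the standard identity $\int_0^\infty y^{\nu-1}e^{-p/y-qy}\,dy=2(p/q)^{\nu/2}\mathrm{K}_{\nu}(2\sqrt{pq})$. With $\nu=r+s-u+1$, this is the source of the factor $2$, the Bessel order $r+s-u+1$, the argument $2\sqrt{(1+K_{au})(1+K_{ub})\,b}=2\sqrt{\tfrac{(1+K_{au})(1+K_{ub})N_0\delta_t}{\varepsilon\beta P_aL_{au}L_{ub}}}$, and (together with the series/PDF prefactors) the half-integer power of $1+K_{ub}$ appearing in \eqref{pc_close}; collecting the constants then assembles the stated expression.

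I expect the main obstacle to be the inner integral itself: the integrand is a product of a Marcum-$\mathrm{Q}$ function and a non-central chi-square density that already contains a Bessel factor, and neither admits an elementary antiderivative, so the entire result rests on replacing both by accuracy-controlled series (the roles of $D$ and $R$) and recognizing that every resulting term is a tractable Bessel-type moment integral. The delicate points are therefore justifying the interchange of summation and integration (convergence of the truncated expansions) and correctly handling the $e^{-p/y-qy}$ two-sided exponential that forces the modified-Bessel identity; the subsequent collapse of the four nested sums into \eqref{pc_close} is lengthy but routine bookkeeping.
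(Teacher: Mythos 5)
Your proposal follows essentially the same route as the paper's Appendix A: rewrite the event as $X > A + B/Y$ with $X=S_{au}$, $Y=S_{ub}$, condition on $Y$, expand the Marcum-$Q$ (CCDF of $X$) and the $\mathrm{I}_0$ inside the nc-$\chi^2$ density of $Y$ via the same $D$- and $R$-truncated series, binomially expand $(A+B/y)^u$, and close the remaining integral with $\int_0^\infty y^{\nu-1}e^{-p/y-qy}\,dy = 2(p/q)^{\nu/2}\mathrm{K}_\nu(2\sqrt{pq})$, which is precisely the identity (G\&R 3.471.9) the paper invokes. The only point worth flagging is that this identity yields a second-kind Bessel function $\mathrm{K}_{r+s-u+1}$, as you say, whereas the displayed closed form \eqref{pc_close} shows a first-kind $\mathrm{I}_{r+s-u+1}$; your derivation agrees with what the paper's appendix actually computes, and that discrepancy is internal to the paper rather than a gap in your argument.
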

{\it Remark:} From Theorem \ref{pc}, we observe that \eqref{pc_close} is a decreasing function with respect to (w.r.t) $P_a$, implying that as the source transmission power increases, the reliability of communication improves. 

\subsection{Secrecy Outage Probability}
Following \cite{Yao2019}, when the instantaneous capacity of the wiretap link $C_E$, defined as
\begin{equation}\label{Def:CapacityWiretapLink}
C_E= \frac{1}{2}\log_2(1+\gamma_{E})
\end{equation}
is larger than the rate difference $R_e = R_t - R_s$, where $R_s$ is rate of the confidential information from~\src-\des~via~\uav, then the secrecy outage occurs and the eavesdropper is able to intercept the transmitted confidential information via~\uav.  The analytical expression for $P_{so}$ is given by Theorem below.
\begin{theorem}\label{po}
The analytical expression for secrecy outage probability $P_{so}$ is given below.
\begin{align}\label{ip}
P_{so}&=\Pr\{C_E>R_e\}=\Pr\left\{\gamma_E>\delta_e\right\}\nonumber\\
&=\Pr\{\max\left(\gamma_E^{(1)}, \gamma_E^{(2)}\right)>\delta_e\}\nonumber\\
&=1-\stackrel{}{\underset{\mathcal{L}_1}{\underbrace{\Pr\{\gamma^{(1)}_{E}\leq\delta_e\}}}}\stackrel{}{\underset{\mathcal{L}_2}{\underbrace{\Pr\{\gamma^{(2)}_{E}\leq\delta_e\}}}},
\end{align}
where $\delta_e \treq 2^{2R_e}-1$, and
\begin{align}\label{l1}
\mathcal{L}_1=1-\frac{P_aL_{ae}\exp\left(-\frac{N_0\delta_e}{P_a L_{ae}}\right)}{P_b L_{be}\delta_e + P_a L_{ae}},
\end{align}
and $\mathcal{L}_2$ is given in \eqref{l2}, where $\mathrm{K}_\nu(\cdot)$ denotes the modified Bessel function with the second kind and $\nu$-th order and $d,u,r,q,s$ are dummy variables.
\begin{figure*}[t]
\begin{align}\label{l2}
\mathcal{L}_2&=1-\frac{2(1+K_{ue})e^{-(K_{au}+K_{ub}+K_{ue})}}{
	\sqrt{K_{ub}\left(1+{\delta_e\frac{(1+K_{au})P_bL_{ub}}{(1+K_{ub})P_aL_{au}}}\right)}}\exp\left(\frac{K_{ub}/2}{\left(1+{\delta_e\frac{(1+K_{au})P_bL_{ub}}{(1+K_{ub})P_aL_{au}}}\right)}-\frac{(1+K_{ue})(1-\beta+\zeta)\delta_eN_0}{(1-\beta)P_aL_{au}}\right)\nonumber\\
&\times \sum_{d=0}^{D}\sum_{u=0}^{d}\sum_{r=0}^{u} \sum_{q=0}^{Q}\sum_{s=0}^{u-r}\Theta({D,Q,d,q,u,r,s})K_{ue}^{q}(1+K_{ue})^{u+q-s} K_{au}^{d} \mathrm{M}_{-(r+\frac{1}{2}),0}\left(K_{ub}+(1+K_{au})\delta_{e}P_bL_{ub}\right)\nonumber\\
& \times \left(\frac{1}{1+\left({\delta_e\frac{(1+K_{au})P_bL_{ub}}{(1+K_{ub})P_aL_{au}}}\right)^{-1}} \right)^{r}  \left(\frac{\delta_eN_0}{\varepsilon \beta P_a L_{au} L_{ue}}\right)^{u-r-s}\left(\frac{(1-\beta+\zeta)\delta_e N_0}{(1-\beta)P_aL_{au}}\right)^s \left({\frac {\delta_{e}\,N_{0}}{\varepsilon\,\beta\,P_{a}\,L_{{ au}}\,L_{{ ue}}}}
\right)^{\frac{q+s+1}{2}}
\nonumber\\
&\times \mathrm{K}_{q+s+1}\left(\sqrt{{\frac {\delta_{e}\,N_{0}}{\varepsilon\,\beta\,P_{a}\,L_{{ au}}\,L_
			{{ ue}}}}}\right),
\end{align}
where
\begin{align}
\Theta({D,Q,d,q,u,r,s})&={\frac {{Q}^{1-2\,q}{D}^{1-2\,d}\Gamma \left( s+1 \right) \Gamma
		\left( Q+q \right) \Gamma \left( D+d \right) }{\Gamma \left( s-u+r+1
		\right)  \left( \Gamma \left( u-r+1 \right)  \right) ^{2}\Gamma
		\left( Q-q+1 \right) \Gamma \left( D-d+1 \right)  \left( \Gamma
		\left( q+1 \right)  \right) ^{2}\Gamma \left( d+1 \right) }},
\end{align}
\noindent\rule{\textwidth}{.5pt}
\end{figure*}

\begin{proof}
See Appendix \ref{Appendix B}.
\end{proof}

\end{theorem}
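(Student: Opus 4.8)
The plan is to exploit the structural independence already built into $\gamma_E=\max\{\gamma_E^{(1)},\gamma_E^{(2)}\}$. By \eqref{gammae1} and \eqref{gammae2}, $\gamma_E^{(1)}$ is a function of only the ground-link gains $S_{ae},S_{be}$, whereas $\gamma_E^{(2)}$ depends only on the air--ground gains $S_{au},S_{bu},S_{ue}$. Since these five gains correspond to physically distinct links and are therefore mutually independent, $\gamma_E^{(1)}$ and $\gamma_E^{(2)}$ are independent, so the two complementary-CDF factors in \eqref{ip} factor as claimed. It then remains to evaluate $\mathcal{L}_1=\Pr\{\gamma_E^{(1)}\le\delta_e\}$ and $\mathcal{L}_2=\Pr\{\gamma_E^{(2)}\le\delta_e\}$ separately.

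For $\mathcal{L}_1$ I would condition on $S_{be}=y$. Rearranging $\gamma_E^{(1)}\le\delta_e$ gives $S_{ae}\le \delta_e(P_bL_{be}y+N_0)/(P_aL_{ae})$, so the inner probability is the exponential CDF of $S_{ae}$ evaluated at that threshold, namely $1-\exp(-\delta_e(P_bL_{be}y+N_0)/(P_aL_{ae}))$. Averaging over the unit-mean exponential density $e^{-y}$ of $S_{be}$ leaves a single elementary integral $\int_0^\infty e^{-(1+\delta_eP_bL_{be}/(P_aL_{ae}))y}\,dy$, which immediately yields \eqref{l1}.

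The substance of the proof is $\mathcal{L}_2$. First I would clear the denominator of \eqref{gammae2} (with $\epsilon=0$) and solve the resulting linear inequality for $S_{au}$, writing the event as $S_{au}\le A\,S_{bu}+B/S_{ue}+C$ for positive constants $A,B,C$ read off from \eqref{gammae2}; here $C$ is deterministic and produces the leading factor $\exp(-(1+K_{au})(1-\beta+\zeta)\delta_eN_0/((1-\beta)P_aL_{au}))$ appearing in \eqref{l2}. Then $\mathcal{L}_2=1-\mathbb{E}_{S_{bu},S_{ue}}[\,\mathrm{Q}(\sqrt{2K_{au}},\sqrt{2(1+K_{au})(A S_{bu}+B/S_{ue}+C)})\,]$, using the Marcum-$\mathrm{Q}$ form of the complementary CDF of the Rician gain $S_{au}$. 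I would expand this Marcum-$\mathrm{Q}$ via its canonical $\mathrm{I}_\nu$ series and expand the $\mathrm{I}_0$ factors in the Rician PDFs of $S_{bu}$ and $S_{ue}$ as power series, so that the two remaining integrations decouple. The $S_{ue}$ integral then takes the form $\int_0^\infty z^{p}\exp(-\alpha z-\gamma/z)\,dz$ — the $1/S_{ue}$ coupling forcing the $\gamma/z$ term — which by the standard identity evaluates to a modified Bessel function of the second kind, giving the $\mathrm{K}_{q+s+1}$ factor. The $S_{bu}$ integral is a Laplace transform of a monomial times a Rician PDF, which produces the Whittaker function $\mathrm{M}_{-(r+1/2),0}$ with argument $K_{ub}+(1+K_{au})\delta_eP_bL_{ub}$. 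To render the otherwise infinite index ranges finite, I would replace the residual exponential factors by the finite-sum approximation $e^{t}\approx\sum_{n=0}^{N}\frac{\Gamma(N+n)N^{1-2n}}{\Gamma(N-n+1)\Gamma(n+1)}t^{n}$, whose coefficient tends to $1/n!$ so that it converges to $e^t$ as $N\to\infty$; this introduces the truncation integers $D$ and $Q$. Collecting all series indices and the resulting Gamma-function ratios into $\Theta(D,Q,d,q,u,r,s)$ assembles \eqref{l2}.

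The hard part will be the $\mathcal{L}_2$ bookkeeping: managing the five interlocking summation indices while tracking which exponential and Bessel pieces must be kept exact (to feed the $\mathrm{K}_\nu$ and Whittaker integrals) and which must be approximated by the finite exponential series to keep the answer in closed form. In particular, ensuring that the $1/S_{ue}$ term is routed into the second-kind Bessel integral and the Rician $S_{bu}$ term into the Whittaker integral — not the reverse — is the delicate step, as is verifying that $D$ and $Q$ genuinely control the accuracy so that \eqref{l2} becomes exact in the limit $D,Q\to\infty$.
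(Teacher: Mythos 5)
Your proposal is correct and follows essentially the same route as the paper's Appendix~B: factor $P_{so}$ using independence of the physically distinct link gains, compute $\mathcal{L}_1$ by conditioning on $S_{be}$ and integrating the exponential CDF, rewrite the $\gamma_E^{(2)}$ event as $S_{au}\le a_1 S_{ub}+a_2 S_{ue}^{-1}+a_3$, invoke the Marcum-$\mathrm{Q}$ complementary CDF with the finite-series approximations of $\mathrm{Q}(\cdot,\cdot)$ and $\mathrm{I}_0(\cdot)$ (which is exactly where the truncation orders $D$ and $Q$ enter), and evaluate the $S_{ue}$ integral via $\int_0^\infty z^{p}e^{-\alpha z-\gamma/z}\,dz\propto\mathrm{K}_\nu$. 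The one bookkeeping point to correct is that the $\mathrm{I}_0$ in the Rician PDF of $S_{ub}$ must be kept \emph{exact} --- the paper applies Gradshteyn--Ryzhik Eq.\ (6.643.2) to $\int_0^\infty y^{r}e^{-\tilde{b}y}\,\mathrm{I}_0\bigl(2\tilde{c}\sqrt{y}\bigr)\,dy$ to obtain the Whittaker factor $\mathrm{M}_{-(r+\frac{1}{2}),0}$ --- while only the $\mathrm{I}_0$ of the $S_{ue}$ PDF is series-expanded; expanding both, as your plan literally states, would not yield the Whittaker function in closed form.
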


\subsection{Instantaneous Secrecy Rate}

\begin{mydef}
The maximum achievable instantaneous secrecy rate (ISR), denoted by $C_{S}$, of the proposed UAV-enabled relaying network is defined as
\begin{align}\label{csec}
C_{S} &\treq [C_M-C_E]^+ \nonumber\\
&=\left[\frac{1}{2}\log_2\left(\frac{1+\gamma_{A\mapsto B}}{1+\gamma_E}\right)\right]^+\nonumber\\&=\left[\frac{1}{2}\log_2\left(1+\frac{\gamma_{A\mapsto B}-\gamma_E}{1+\gamma_E}\right)\right]^+,
\end{align}	
where $[x]^+\treq\max(x,0)$, $C_M$
is the capacity between~\src-\des~given in (\ref{Def:CapacityMainLink}), and $C_E$ is the capacity of the wiretap link given in (\ref{Def:CapacityWiretapLink}), and $\gamma_{A\mapsto B}$ and $\gamma_{E}$ are given in \eqref{gammaAB} and \eqref{gammaE}, respectively. 
\end{mydef}

Note that we assume that the channel coefficients between nodes remain constants during each frame and vary from one
frame to next independently. Then all parameters in (\ref{csec}) are assumed to be known except the power allocation factor $\lambda$, thereby we form the following optimization problem
\begin{equation*}
\begin{aligned}
& \underset{\lambda}{\text{maximize}}
& & C_{S}({\lambda}) \\
& \text{subject to}
& & P_a+P_b=P \\
&&& P_a=\lambda P\\
&&& P_b=(1-\lambda)P.
\end{aligned}
\end{equation*}
Considering $C_S\geq 0$ is guaranteed under optimal power allocation, the above optimization problem is equivalent to finding the optimal power allocation factor, i.e.,
\begin{align} \label{phiLambda}
\lambda^\star&=\mathrm{\arg \max}~~\phi(\lambda ) \quad \textrm{s.t.} ~~0 \leq \lambda \leq 1,
\end{align}
where 
\begin{equation}\label{Eq:SNR_Gap_Ratio}
\phi(\lambda) = \frac{\gamma_{A\mapsto B}-\gamma_E}{1+\gamma_E},
\end{equation}
which is due to the fact that $\log_2(1+\phi(\lambda))$ in (\ref{csec}) is a strictly increasing function w.r.t $\phi(\lambda)$, and thus (\ref{phiLambda}) can be solved analytically in the high SNR regime as given in Theorem \ref{theorem3}.
\begin{theorem}\label{theorem3}
In large SNR regime, the function $\phi(\lambda)$ in (\ref{Eq:SNR_Gap_Ratio}) is proven to be quasi-concave w.r.t $\lambda$ in the feasible set where $0<\lambda<1$ and $\lambda^\star$ can be obtained as
\begin{align}
 \lambda^\star&=\frac{1}{1+\sqrt{\nu}},& \nu \geq 1
\end{align}
where $\nu = {\frac{X}{Y}+\frac{V}{W}} $.
\end{theorem}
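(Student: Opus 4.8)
The plan is to reduce $\phi(\lambda)$ in \eqref{Eq:SNR_Gap_Ratio} to an explicit ratio of elementary functions of $\lambda$, read off quasi-concavity from the concave/convex structure of that ratio, and then locate its unique interior stationary point. First I would use the large-SNR assumption (with $\epsilon=0$) to expose the $\lambda$-dependence of both SINRs. Since the denominator of $\gamma_{A\mapsto B}$ in \eqref{gammaAB} contains only $N_0$-terms, it does not depend on the split, so $\gamma_{A\mapsto B}=A\lambda$ is affine and increasing, with $A$ collecting the channel gains, $\varepsilon$, $\beta$, and $P$. Dropping the $N_0$-terms in \eqref{gammae1} and \eqref{gammae2} leaves $\gamma_E^{(1)}\approx\frac{X}{Y}\frac{\lambda}{1-\lambda}$ and $\gamma_E^{(2)}\approx\frac{V}{W}\frac{\lambda}{1-\lambda}$, where I identify $X/Y=S_{ae}L_{ae}/(S_{be}L_{be})$ and $V/W=S_{au}L_{au}/(S_{bu}L_{bu})$. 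Combining the two branches of $\gamma_E=\max(\gamma_E^{(1)},\gamma_E^{(2)})$ in \eqref{gammaE} via the conservative, security-safe bound $\max(a,b)\le a+b$ gives the compact form $\gamma_E=\nu\,\frac{\lambda}{1-\lambda}$ with $\nu=\frac{X}{Y}+\frac{V}{W}$.

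Next I would prove quasi-concavity. Adding a constant preserves quasi-concavity, so it suffices to treat $\phi+1=\frac{1+\gamma_{A\mapsto B}}{1+\gamma_E}$. The numerator $1+A\lambda$ is affine (hence concave) and strictly positive on $(0,1)$; the denominator $1+\gamma_E=1+\nu\big(\frac{1}{1-\lambda}-1\big)$ is strictly positive and convex on $(0,1)$, since $\frac{d^2}{d\lambda^2}\frac{1}{1-\lambda}=\frac{2}{(1-\lambda)^3}>0$. A ratio of a nonnegative concave function to a positive convex function is quasi-concave: for any level $\alpha\ge 0$ the superlevel set $\{\phi+1\ge\alpha\}=\{(1+\gamma_{A\mapsto B})-\alpha(1+\gamma_E)\ge 0\}$ is the nonnegativity set of a concave function and is therefore convex, while for $\alpha<0$ it is all of $(0,1)$. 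This establishes that $\phi$ is quasi-concave on the feasible set $0<\lambda<1$.

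Finally I would locate $\lambda^\star$. Because $\phi$ is quasi-concave and differentiable, its maximizer is the unique interior stationary point, obtained from $\phi'(\lambda)=0$, equivalently $\frac{\gamma_{A\mapsto B}'}{1+\gamma_{A\mapsto B}}=\frac{\gamma_E'}{1+\gamma_E}$. Writing $g(\lambda)=\frac{(1+A\lambda)(1-\lambda)}{1+(\nu-1)\lambda}$, the condition $g'(\lambda)=0$ reduces after simplification to $(\nu-1)\lambda^2+2\lambda+(\tfrac{\nu}{A}-1)=0$; letting $A\to\infty$ in the large-SNR regime yields $(\nu-1)\lambda^2+2\lambda-1=0$, whose only root in $(0,1)$ is $\lambda^\star=\frac{1}{1+\sqrt{\nu}}$ (the companion root is negative when $\nu>1$ and exceeds $1$ when $\nu<1$). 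The hypothesis $\nu\ge 1$ then guarantees that this stationary point is the admissible maximizer, with $\lambda^\star\in(0,\tfrac12]$, i.e.\ allocating at least half of $P$ to the destination's jamming.

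I expect the main obstacle to be the second step: reducing the two-branch $\max$ wiretap SINR in \eqref{gammaE} to the single convex term $\nu\frac{\lambda}{1-\lambda}$ and justifying the clean concave-over-convex split, rather than the terminal quadratic, whose root is routine to verify.
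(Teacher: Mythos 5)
Your proposal is correct and follows essentially the same route as the paper's Appendix C: both work in the high-SNR regime, both replace the two-branch wiretap SINR $\max\{\gamma_E^{(1)},\gamma_E^{(2)}\}$ in \eqref{gammaE} by the sum of the two branches (which is exactly why $\nu$ comes out as the sum of the two gain ratios, $\nu=\frac{c_2}{c_3}+\frac{c_4}{c_5}$ in the paper's notation), and both reduce the stationarity condition to the quadratic $(\nu-1)\lambda^2+2\lambda-1=0$, whose admissible root is $\lambda^\star=\frac{1}{1+\sqrt{\nu}}$. The one place you genuinely diverge is the quasi-concavity argument: the paper writes the surrogate objective as an explicit rational function $\frac{c_1c_3c_5\lambda(\lambda-1)^2}{b_2\lambda^2+b_1\lambda+b_0}$, factors the roots $0$ and $1$ out of the derivative's numerator, and then merely asserts that $\Phi_1(\lambda)$ changes sign at $\lambda^\star$ (``it is easy to prove''); you instead note that $\phi+1$ is a nonnegative affine function of $\lambda$ divided by a positive convex one and invoke the standard superlevel-set argument. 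Your version is cleaner, supplies the justification the paper only gestures at, and makes the quasi-concavity claim in the theorem statement an actual consequence rather than an afterthought. Two cosmetic points: your identification of $X/Y$ and $V/W$ is swapped relative to the paper's convention (there $X/Y=S_{au}L_{au}/(S_{ub}L_{ub})$ and $V/W=S_{ae}L_{ae}/(S_{be}L_{be})$), which is immaterial since only their sum enters $\nu$; and, like the paper, your $\lambda^\star$ optimizes the additive upper bound on $\gamma_E$ rather than the exact max---a limitation you correctly flag and which the paper shares without comment.
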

\begin{proof}
See Appendix \ref{Appendix C}.
\end{proof}

\subsection{Achievable Average Secrecy Rate}
\begin{mydef}
The achievable average secrecy rate (ASR) is defined as
\begin{equation}\label{Def:AvgSecRate}
    \bar{C}_{S}\treq\E\{C_{S}\},
\end{equation}
Averaging over all realizations of the channels yields $\bar{C}_{S}$ in (\ref{asr_exact}).
\end{mydef}
\begin{figure*}
\begin{align} \label{asr_exact}
\bar{C}_{S}&=\E\biggl\{\Big[\frac{1}{2}\log_2(1+\gamma_{A\mapsto B})-\frac{1}{2}\log_2(1+\gamma_E)\Big]^+\biggl\}\nonumber\\
&=\frac{1}{2 \ln 2}{\int_{x=0}^{\infty}\int_{y=0}^{\infty}\int_{z=0}^{\infty}\int_{v=0}^{\infty}\int_{w=0}^{\infty}
}\left
[\ln\left(\frac{1+ \gamma_{A\mapsto B}}{1+\gamma_E}\right)\right]^+
f_{X, Y, Z, V, W}(x,y,z,v,w) dx dy dz dv dw,
\end{align} 
\noindent\rule{\textwidth}{.5pt}
\end{figure*}
Owning to the fact that the exact computation of (\ref{asr_exact}) is an arduous task and hence, instead, we derive a tight lower-bound for ASR in the Theorem below. First, we present the following worthwhile lemma, by which we then delve into the derivation of the closed-form lower-bound for the ASR in Theorem \ref{asr_lb}. 

\begin{lemma}\label{lem1}
Let $X$ be a non-central chi-square random variable with two degrees of freedom and the non-centrality parameter $\lambda$, also $b$ holds non-negative values,  then the expectation of the new random variable  $Y=\ln (X + b)$, can be obtained as follows.
\begin{align}\label{lnxpb}
\E\{\ln (X + b)\}&= \int_{0}^{\infty}\frac{1}{2}\ln(x+b)e^{-\frac{x+\lambda}{2}}{I}_0(\sqrt{\lambda x}){d}x\nonumber\\
&\treq 
\begin{cases}
        g_1(\lambda),~~~~~~~~~~\text{for } b = 0\\
        g_2(\lambda, b),~~~~~~~ \text{for } b > 0
\end{cases}
\end{align}
where $g_1(\cdot)$ and $g_2(\cdot, \cdot)$ are defined respectively as
\begin{eqnarray}\label{g1g2}
g_1(x)&\stackrel{(a)}{=}& \exp(-\frac{x}{2}) \sum_{r=0}^{R}L_r x^r , \nonumber\\
g_2(x,b)&\stackrel{(b)}{=}&\exp\left(-\frac{x}{2}\right)\sum_{r=0}^{R}C_r(b)x^r,
\end{eqnarray}
where  $R$ is some positive integer and the coefficients $L_r$ and $C_r(b)$ are given, as finite series, respectively, by
\begin{align}
L_r=\frac{\Gamma(R+r)R^{1-2r}(\Psi(r+1)+\ln 2)}{\Gamma(r+1)\Gamma(R-r+1)2^r},
\end{align}
and
\begin{align}\label{Eq:C_r(b)}
C_r(b)=\frac{\Gamma(R+r)R^{1-2r}\Phi(r,b)}{\Gamma(r+1)^2\Gamma(R-r+1)4^r},
\end{align}
wherein $\Gamma(\cdot)$ and $\Psi(\cdot)$ are the Gamma function and the Psi function respectively defined as $\Gamma(x) \treq \int_{0}^{\infty}t^{x-1}e^{-t}dt$ and $\Psi(x)\treq\frac{d}{dx}\Gamma(x)$ \cite{Gradshteyn2014}. Moreover, the function $\Phi(\cdot, \cdot)$ in \eqref{Eq:C_r(b)} is given by (see \eqref{Def:Phi})
\begin{figure*}[t]
\begin{align}\label{Def:Phi}
\Phi(i,b)&= \exp\left(\frac{b}{2}\right)\sum_{j=0}^{i}{i \choose j} (-b)^{i-j}2^{j}\bigg[\MeijerG*{3}{0}{2}{3}{1, 1}{0, 0, j+1}{\frac{b}{2}}+\ln(b)~\Gamma(j+1, \frac{b}{2})\bigg]\nonumber\\
&\stackrel{(a)}{=}\sum_{j=0}^{i}\sum_{k=0}^{j}(-1)^i{i \choose j} \left(\frac{2}{b}\right)^{j} \bigg[e^{b/2}\MeijerG*{3}{0}{2}{3}{1, 1}{0, 0, j+1}{\frac{b}{2}}+\ln(b) {j \choose k}(j-k)!\left({\frac{b}{2}}\right)^k\bigg],
\end{align}
\noindent\rule{\textwidth}{.5pt}
\end{figure*}
wherein  $\MeijerG*{a}{b}{p}{q}{\mathbf{a}, \mathbf{b}}{\mathbf{p}, \mathbf{q}}{x}$ is the analytical MeijerG function and $\Gamma(x, a) \treq \int^{\infty}_{x} e^{-t} t^{a-1} dt$ is the upper incomplete Gamma function \cite{Gradshteyn2014}. It should be noted that $(a)$ and $(b)$ are respectively obtained by applying {\cite[Eq. (4.352.1)] {Gradshteyn2014}} and {\cite[Eq. (8.352.2)] {Gradshteyn2014}} to calculate the integral expression given in \eqref{lnxpb}, and after tedious manipulations.

\end{lemma}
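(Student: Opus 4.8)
The plan is to read off the density of $X$ from \eqref{lnxpb}, expand the Bessel kernel into a finite series, and reduce the whole expectation to elementary $\Gamma$-type integrals. First I note that $\tfrac12 e^{-(x+\lambda)/2}\mathrm{I}_0(\sqrt{\lambda x})$ is precisely the PDF of a two-degree-of-freedom non-central chi-square variable with non-centrality $\lambda$, so that $\E\{\ln(X+b)\}=\int_0^\infty\ln(x+b)f_X(x)\,dx$. The enabling manipulation is to substitute the finite-term representation $\mathrm{I}_0(\sqrt{\lambda x})\approx\sum_{r=0}^{R}\frac{\Gamma(R+r)R^{1-2r}}{\Gamma(r+1)^2\Gamma(R-r+1)}(\tfrac{\lambda x}{4})^r$, whose coefficients converge to the exact Taylor coefficients $1/(r!)^2$ as $R\to\infty$ and in which $R$ is the accuracy parameter already employed in Theorem~\ref{pc}. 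Interchanging the finite sum with the integral leaves, for each $r$, the single integral $\int_0^\infty x^{r}\ln(x+b)e^{-x/2}\,dx$, which I would evaluate separately in the two regimes $b=0$ and $b>0$.

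For $b=0$ this is the classical log-weighted Gamma integral, and \cite[Eq.~(4.352.1)]{Gradshteyn2014} with order $r+1$ and scale $\tfrac12$ gives $\int_0^\infty x^{r}\ln x\,e^{-x/2}\,dx=\Gamma(r+1)2^{r+1}(\Psi(r+1)+\ln 2)$. Reinserting the prefactor $\tfrac12 e^{-\lambda/2}$ and the weights $\lambda^{r}/4^{r}$, one factor $\Gamma(r+1)$ cancels and $2^{r}/4^{r}=2^{-r}$, which collapses directly to $g_1(\lambda)=e^{-\lambda/2}\sum_{r=0}^{R}L_r\lambda^r$ with $L_r$ as stated; this is equality $(a)$ in \eqref{g1g2}.

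The case $b>0$ is where the real work lies. I would put $t=x+b$ and expand $(t-b)^r$ by the binomial theorem, reducing matters to $\int_b^\infty t^{j}\ln t\,e^{-(t-b)/2}\,dt$, $j=0,\dots,r$; rescaling $u=t/2$ turns each into $2^{j+1}e^{b/2}[\ln 2\,\Gamma(j+1,\tfrac b2)+\int_{b/2}^\infty u^{j}\ln u\,e^{-u}\,du]$. The genuinely hard term is the last one, a logarithm-weighted tail, and it is exactly what introduces the Meijer-$G$ function: it equals the parameter derivative $\left.\partial_s\Gamma(s,\tfrac b2)\right|_{s=j+1}=\Gamma(j+1,\tfrac b2)\ln\tfrac b2 + G^{3,0}_{2,3}\!\big(\tfrac b2 \mid {1,1\atop 0,0,j+1}\big)$. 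Combining the two logarithms via $\ln 2+\ln\tfrac b2=\ln b$ produces precisely the $\ln(b)\,\Gamma(j+1,\tfrac b2)$ and $G^{3,0}_{2,3}$ terms of the first line of \eqref{Def:Phi}, and hence $g_2(\lambda,b)$. Finally, applying \cite[Eq.~(8.352.2)]{Gradshteyn2014}, i.e. $\Gamma(j+1,\tfrac b2)=e^{-b/2}\sum_{k=0}^{j}\binom{j}{k}(j-k)!(\tfrac b2)^k$, expands the incomplete Gamma into the finite double sum and cancels $e^{b/2}$ on the logarithmic term, yielding equality $(a)$ of \eqref{Def:Phi}. The two places most likely to cause trouble are fixing the Meijer-$G$ parameters $(3,0;2,3;\{1,1\};\{0,0,j+1\})$ in the derivative-of-incomplete-Gamma identity, and shepherding the factors of $2$, $4^{r}$ and $e^{\pm b/2}$ through the binomial expansion — exactly the \emph{tedious manipulations} the statement alludes to.
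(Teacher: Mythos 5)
Your proof is correct and takes essentially the same route the paper intends: expand $\mathrm{I}_0$ via the finite-series approximation already used in Appendix~A, reduce the expectation to $\int_0^\infty x^r\ln(x+b)e^{-x/2}\,dx$, and invoke G\&R~4.352.1 for $b=0$ and, for $b>0$, the shift $t=x+b$ with the order-derivative of the incomplete Gamma (which is what produces the $G^{3,0}_{2,3}$ term) followed by G\&R~8.352.2. Your bookkeeping reproduces $L_r$, $C_r(b)$ and the first line of $\Phi(i,b)$ exactly, supplying the ``tedious manipulations'' the paper leaves implicit.
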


\begin{theorem}\label{asr_lb}
The lower bound of the average secrecy rate of the proposed secure UAV-enabled relaying system is given by
\begin{align}\label{lb_asr_expr}
\bar{C}_{LB}= \frac{1}{2\ln 2}\bigg[\ln(1+\exp(T_1))-\ln(1+T_2)\bigg]^+,
\end{align}
with
\begin{align}\label{Eq:T1}
T_1&\treq\ln\left(\frac{(1-\beta)P_aL_{au}}{(1-\beta+\zeta)N_0}\right)+g_1({\lambda_{au}})+g_1({\lambda_{ub}})\nonumber\\
&-g_2\left({\lambda_{ub}}, \frac{1-\beta}{\varepsilon \beta (1-\beta+\zeta) L_{ub}}\right),
\end{align}

\begin{align}
T_2&\treq \frac{\varepsilon\beta P_a(\lambda_{au}+2)}{\varepsilon\beta P_{b}(\lambda_{bu}+2)+\varepsilon\beta(1+\frac{\zeta}{1-\beta})N_0+\frac{N_0}{(\lambda_{ue}+2)}}\nonumber\\
&+\frac{P_a}{P_b}\frac{L_{ae}}{L_{be}}\exp\left(\frac{N_0}{P_bL_{be}}\right) \mathrm{E_1}\left(\frac{N_0}{P_bL_{be}}\right).
\end{align}

\end{theorem}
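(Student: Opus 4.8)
The plan is to convert the expectation of the rectified rate difference in \eqref{asr_exact} into a deterministic expression by pushing all the randomness through a chain of Jensen-type inequalities and then invoking Lemma~\ref{lem1}. First I would dispose of the outer positive-part operator: since $[\,\cdot\,]^+=\max(\cdot,0)$ is convex, Jensen's inequality gives
\begin{align}
\bar{C}_{S}=\E\Big\{[C_M-C_E]^+\Big\}\geq\Big[\E\{C_M\}-\E\{C_E\}\Big]^+,
\end{align}
so it suffices to lower-bound $\E\{C_M\}$ and upper-bound $\E\{C_E\}$; the monotonicity of $[\,\cdot\,]^+$ then propagates both one-sided bounds through to the final expression.

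Second, for the main link I would write $\E\{C_M\}=\frac{1}{2\ln 2}\E\{\ln(1+\gamma_{A\mapsto B})\}$ and exploit the convexity of $t\mapsto\ln(1+e^{t})$ evaluated at $t=\ln\gamma_{A\mapsto B}$, which by Jensen yields $\E\{\ln(1+\gamma_{A\mapsto B})\}\geq\ln\big(1+\exp(\E\{\ln\gamma_{A\mapsto B}\})\big)$. This identifies $T_1\treq\E\{\ln\gamma_{A\mapsto B}\}$. Setting $\epsilon=0$ and factoring $\gamma_{A\mapsto B}$ from \eqref{gammaAB} into the form $\frac{(1-\beta)P_aL_{au}}{(1-\beta+\zeta)N_0}\cdot\frac{S_{au}S_{ub}}{S_{ub}+b}$ with $b=\frac{1-\beta}{\varepsilon\beta(1-\beta+\zeta)L_{ub}}$, the logarithm splits into a deterministic constant plus $\E\{\ln S_{au}\}+\E\{\ln S_{ub}\}-\E\{\ln(S_{ub}+b)\}$. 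Each term is exactly the object computed in Lemma~\ref{lem1}: the first two correspond to the $b=0$ case and give $g_1(\lambda_{au})$ and $g_1(\lambda_{ub})$, while the last is the $b>0$ case and gives $g_2(\lambda_{ub},b)$, reproducing \eqref{Eq:T1}.

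Third, for the wiretap link I would invoke the concavity of $\ln(1+x)$ to obtain $\E\{\ln(1+\gamma_E)\}\leq\ln(1+\E\{\gamma_E\})$, so that $T_2\treq\E\{\gamma_E\}$. Since $\gamma_E=\max(\gamma_E^{(1)},\gamma_E^{(2)})$ with both arguments nonnegative, I would use $\E\{\gamma_E\}\leq\E\{\gamma_E^{(1)}\}+\E\{\gamma_E^{(2)}\}$, which produces the two additive terms of $T_2$. The term $\E\{\gamma_E^{(1)}\}$ is evaluated exactly: averaging \eqref{gammae1} over the exponential $S_{ae},S_{be}$ (using $\E\{S_{ae}\}=1$ and $\int_{0}^{\infty}\frac{e^{-v}}{av+c}\,dv=\frac{1}{a}e^{c/a}\mathrm{E}_1(c/a)$) yields the exponential-integral term. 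For $\E\{\gamma_E^{(2)}\}$, whose exact average is intractable, I would cancel the common $S_{ue}L_{ue}$ factor in \eqref{gammae2} and apply the ratio-of-means surrogate $\E\{N/D\}\approx\E\{N\}/\E\{D\}$, which gives the rational first term of $T_2$.

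The hard part will be the wiretap term: the simultaneous presence of the maximum and the nonlinear ratio in $\gamma_E^{(2)}$ makes an exact expectation hopeless, so the argument must settle for the summation bound on the $\max$ together with a mean-ratio surrogate for $\E\{\gamma_E^{(2)}\}$. This surrogate is not itself a strict upper bound, which is precisely why the statement is phrased as a (tight) lower bound rather than an equality, and some care is needed to argue that the net effect of these heterogeneous steps still yields a useful bound. A secondary technical nuisance is the bookkeeping when applying Lemma~\ref{lem1} to the unit-mean square-Rice variables $S_{ij}$ versus the standard non-central $\chi^2$ normalization used in the lemma, which rescales the non-centrality to $\lambda_{ij}$ and must be tracked consistently through $g_1$ and $g_2$.
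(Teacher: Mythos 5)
Your proposal is correct and follows essentially the same route as the paper's Appendix D: Jensen's inequality applied first to $[\cdot]^+$, then to $\ln(1+e^t)$ for the main link and $\ln(1+x)$ for the wiretap link, the sum bound $\E\{\max(\gamma_E^{(1)},\gamma_E^{(2)})\}\leq\E\{\gamma_E^{(1)}\}+\E\{\gamma_E^{(2)}\}$, Lemma~\ref{lem1} for $T_1$, the exact exponential-integral evaluation of $\E\{\gamma_E^{(1)}\}$, and the ratio-of-means surrogate for $\E\{\gamma_E^{(2)}\}$. Your closing caveat that this last surrogate is an approximation rather than a strict bound is accurate and is in fact glossed over in the paper itself (which writes it as $\approx$).
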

\begin{proof}
See Appendix \ref{Appendix D}.
\end{proof}

\section{Numerical Results and Discussions}

In this section, we present simulation results of connection probability, secrecy outage, and average secrecy rate in order to validate our theoretical results in the paper. We also demonstrate both reliability and security performance enhancements offered by UAV-based relaying (UR) and destination-assisted CJ transmission protocol, compared to the case using ground relaying (GR), and the case using UR but without destination-jamming.

Unless otherwise stated, we consider the following system parameters in all simulations. We assume $D_x$=10 (the normalized distance of~\src-\des~w.r.t $100$m), $H$=1.5 (the normalized height at which UAV operates w.r.t $100$m). We assume the nodes locations: $W_A=(0,0,0)$, $W_B=(D_x,0,0)$, $W_E=(\frac{4D_x}{5},1,0)$, $W_U=(\frac{D_x}{5},0,H)$. The path-loss exponents are $\alpha_L$=2 and $\alpha_N$=3.5, respectively. Besides, the network transmission rate $R_t$=0.5 bits/s/Hz, the secrecy rate $R_s$=0.2 bits/s/Hz are adopted. Further, we assume $N_0=10^{-2}$, $\zeta=2$, energy harvesting efficiency factor $\varepsilon=0.7$, $\omega_1=0.28$, $\omega_2=9.61$, $\kappa_m=1$, and $\kappa_M=10$ \cite{Zhu2018, Khuwaja2018, Mamaghani2017}. Also, the EPSA represents equal power allocation (i.e., $\lambda=0.5$) and equal power splitting ratio (i.e., $\beta=0.5$). The Monte-Carlo simulation are obtained through averaging over $100,000$ realizations of the channel coefficients.

\begin{figure}[t]
	\center{\includegraphics[ width=0.8\columnwidth ]{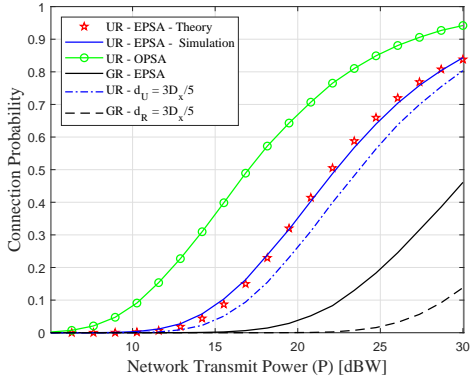}}
	\caption{\label{sim1} Connection Probability vs. Network Transmit Power}                           
\end{figure}

Fig.~\ref{sim1} illustrates the CP in \eqref{pc} for the proposed transmission protocol with EPSA and OPSA, respectively, against the network transmit power $P$. Here, OPSA refers to the case with optimal $\lambda^\star$ and $\beta^\star$ for a given $P$, $d_U$ represents the horizontal projection distance of~\uav~to~\src, and $d_R$ is the distance from the ground relay to source. Here for a fair comparison, we assume $d_U = d_R$. We can observe that the CP of OPSA-based scheme gets approximately doubled, compared to EPSA-based one, in the practical range of transmit power, e.g., for $P=20$ dBW. The figure compares the simulated CP and theoretical CP in \eqref{pc} of the proposed protocol with EPSA and demonstrates that they are well matched. Fig.~\ref{sim1} also illustrates a significant CP improvement of the proposed protocol, when compared to the case using ground relaying, under the setting of EPSA. 

\begin{figure}[t]
	\center{\includegraphics[ width=0.8\columnwidth ]{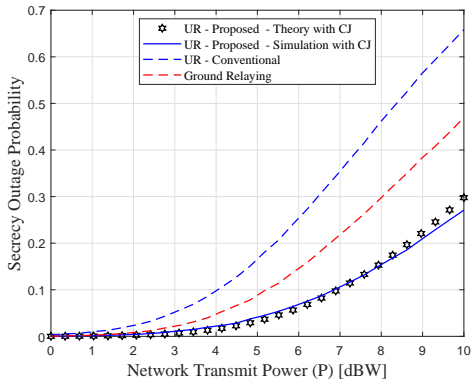}}
	\caption{\label{sim2} Secrecy Outage Probability vs. Network Transmit Power}                   
\end{figure}

Fig. \ref{sim2} illustrates the simulated and theoretical SOP versus network transmit power $P$ for the proposed protocol using CJ with $\lambda = 0.7$ and demonstrates they are well matched. The figure also compares the proposed protocol, the UAV based relaying without any security technique and the one using ground relaying, and demonstrates the effectiveness of our protocol in terms of SOP thanks to the joint effect of destination CJ and SWIPT.

\begin{figure}[t]
	\center{\includegraphics[ width=0.8\columnwidth ]{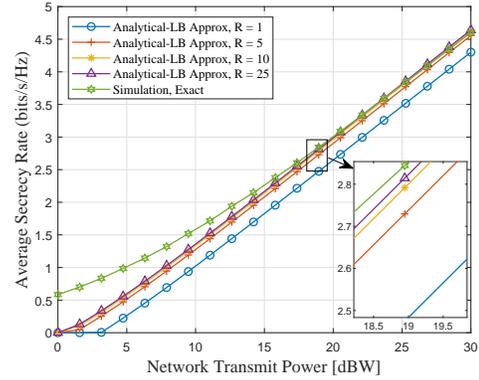}}
	\caption{\label{sim3_1} Average Secrecy Rate vs. Network Transmit Power}             
\end{figure}


Fig. \ref{sim3_1} compares the lower bound of ASR (see Theorem \ref{asr_lb}) with different truncated values $R$ (see Theorem 4) and the simulated result using the exact expression \eqref{asr_exact}. We can see that they are very close,  especially when $P\geq17$dBW. Further, we observe that normalizing the gap between the exact value and the lower bound w.r.t the exact value yields the relative errors of $0.0907$, $0.0617$, and $0.0512$  for the cases of $R = 5$, $R = 10$, and, $R = 25$, respectively. This demonstrates the finite series we obtained are acceptably valid while explicitly truncated.

\begin{figure}[t]
	\center{\includegraphics[ width=0.8\columnwidth ]{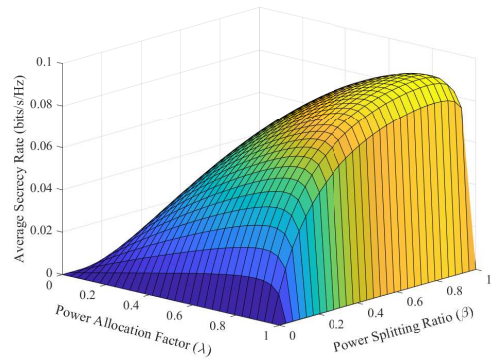}}
	\caption{\label{sim3} Average Secrecy Rate vs. Power Allocation Factor and Power Splitting Ratio}     
\end{figure}


Fig.~\ref{sim3} depicts the impact of $\lambda$ and $\beta$ on ASR. Given a fixed power budget $P = 20$ dBW, the ASR increases when $\lambda$ and $\beta$ increase. The best ASR can be obtained at $\lambda=0.83$ and $\beta=0.8$. A higher $\lambda$ (i.e., higher $P_a$) provides higher reliability of data transmission, but when $\lambda>0.83$, ASR decreases slightly. This is due to the fact that a high source power can enhance $C_M$ and a low jamming power is sufficient to degrade eavesdropper. Overall, the plot provides a good trade-off between source power and destination jamming power in terms of ASR.

\begin{figure}[t]
	\center{\includegraphics[ width=0.8\columnwidth ]{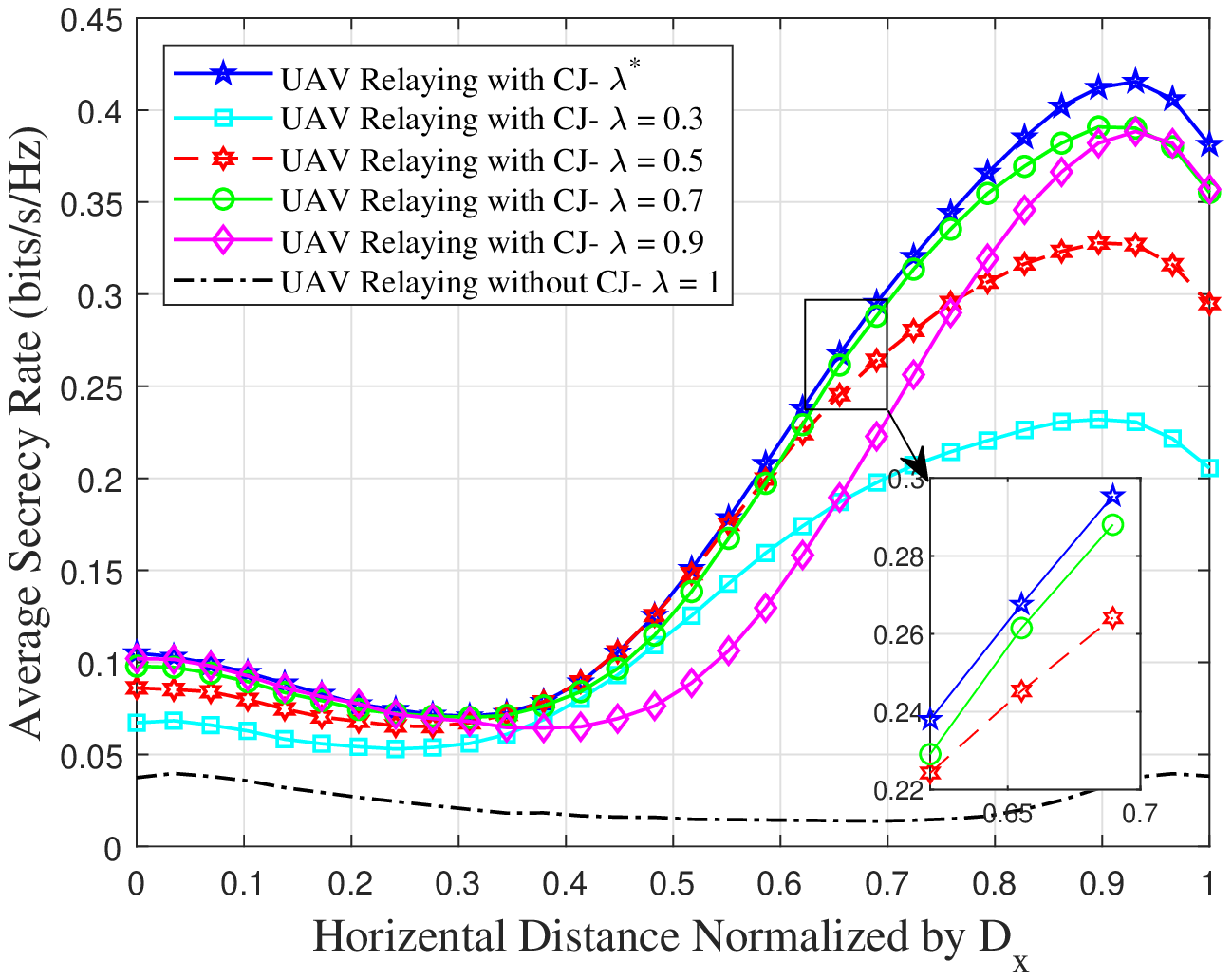}}
	\caption{\label{sim4} Average Secrecy Rate vs. Horizontal distance Factor (d)}                                         
\end{figure}
Fig. \ref{sim4} shows ASR against normalized horizontal distance of the proposed protocol with and without destination CJ for different $\lambda$ values. Here normalized horizontal distance is defined as the ratio of the horizontal distance of (\src-\uav) to that of (\src-\des). This scenario can be viewed as a relay moving from the initial location above~\src~with a direct path to the final location right above~\des~and we are seeking the best location of~\uav~in terms of ASR. 

From Fig. \ref{sim4}, we observe that the proposed protocol with CJ outperforms the one without CJ, and particularly, the proposed protocol with CJ and $\lambda^*$ leads to the best ASR, compared to other $\lambda$ values, over all different normalized horizontal distances. 
Also, we observe the best location of \uav~is $0.9D_x$. This is due to the fact that when \uav~is near \des~and far from \src, we need to allocate more $P_a$ to enhance $C_M$ for source transmission and less $P_b$, which is sufficient for destination CJ. When~\uav~is at the location of $0.2D_x-0.5D_x$ (i.e.,~\uav~is in the vicinity of eavesdropper), the wiretap link obviously reduces the ASR. Last but not least, when \uav~is within ($0.35D_x,0.65D_x$), more proportion of the power budget should be dedicated for jamming in order to improve ASR.


\begin{figure}[t]
	\center{\includegraphics[ width=0.8\columnwidth ]{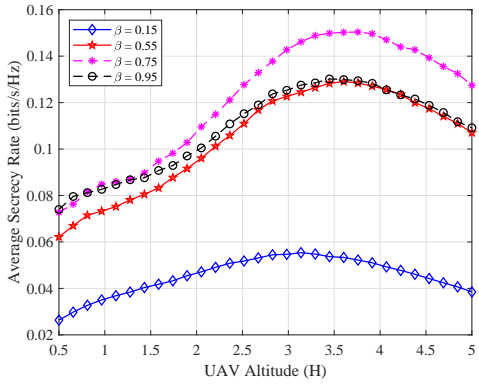}}
	\caption{\label{sim5} Average Secrecy Rate vs. Normalized UAV Altitude}                                         
\end{figure}

Fig. \ref{sim5} displays ASR against UAV altitude $H$ for the proposed protocol with different $\beta$. Firstly, we observe that $\beta=0.75$ and $H=3.5$ yield the best ASR among all. Secondly, we observe, when $H$ increases till $3.5$, ASR increases to the peak, and after that, ASR decreases. This is due to the fact that, from $H=3.5$ onwards, the attenuation factor caused by the \uav's long distance from the ground nodes significantly affects the ASR. When $0<H\leq 3.5$, the ASR is a quasi-concave function having one optimal value for given system parameters. Furthermore, for \uav~is in low altitude, e.g., $H=0.6$, increasing $\beta$ (i.e., a larger proportion of the received power is stored for EH which is then utilized in AF relaying), yields improved ASR. For higher altitudes, the ASR increases if $0< \beta < \beta^*$, or decreases if $\beta^*< \beta < 1$, where $\beta^*=0.75$. This is due to the fact that, when $\beta>\beta^*$, large UAV transmission power can lead to the improved received SNR at \des, but this also yields better signal reception at Eve.

\section{Conclusions}
In this paper, we proposed a secure and energy-efficient source-UAV-destination transmission protocol with the aid of destination cooperative jamming and SWIPT at the relay, when a passive eavesdropper exists. We derived the connection probability, secrecy outage probability, instantaneous secrecy rate, and average secrecy rate of the proposed transmission protocol from~\src~to~\des~via a stationary~\uav~and verified them via simulations. Further, by simulations, we demonstrate significant performance improvement of our protocol, when compared to conventional transmission protocol with ground relaying and UAV-based transmission protocol without destination-assisted jamming. We also identified the best location of~\uav~that provides the optimal ASR, given a fixed eavesdropper location. Finally, we evaluated the impacts of different system parameters on the system performance. For future work, we will consider the extension to flying UAV relaying with extra degrees of freedom in the system design.

\appendices
\numberwithin{equation}{section}
\makeatletter 
\newcommand{\section@cntformat}{Appendix \thesection:\ }
\makeatother

\section{Derivation of Connection Probability}\label{Appendix A}
The connection probability $P_c$ given by \eqref{pc_def} can be obtained as follows (see \eqref{prof_pc})
\begin{figure*}
\begin{align}\label{prof_pc}
P_c&=\Pr\left\{X> A + \frac{B}Y\right\}=1-\E_{Y}\left\{F_{X\mid Y}\left(A+\frac{B}{Y}\right)\right\}\nonumber\\
&\hspace{-5mm}=(1+K_{ub})e^{-K_{ub}}\int_{0}^{\infty}\mathrm{Q}\left(\sqrt{2K_{au}}, \sqrt{2(1+K_{au})\left(A+\frac{B}{y}\right)}\right) e^{-(1+K_{ub})y}\mathrm{I}_{0}\left(2\sqrt{K_{ub}(1+K_{ub})}\sqrt{y}\right)dy\nonumber\\
&\hspace{-5mm}\stackrel{(a)}{=}(1+K_{ub})e^{-K_{au}-K_{ub}-(1+K_{au})A}\sum_{d=0}^{D} \sum_{u=0}^{d} \sum_{s=0}^{u}\frac{ {\Gamma (D+d) D^{1-2 d}  \Gamma (R+r) K_{au}^d (K_{au}+1)^u A^s B^{u-s} {u\choose s}}  }{\Gamma (D-d+1)\Gamma(d+1)\Gamma(u-s+1)\Gamma(s+1)} \nonumber\\
&\times \int_{0}^{\infty}y^{-(u-s)}\exp\left(-(1+K_{ub})y-(1+K_{au})\frac{B}{y}\right)\mathrm{I}_0\left(2\sqrt\{K_{ub}(1+K_{ub})y\}\right)dy,
\end{align}
\noindent\rule{\textwidth}{.5pt}
\end{figure*}
where $X\treq S_{au}$, $Y \treq S_{ub}$, \[A\treq{\frac{(1-\beta+\zeta)N_0\delta_t}{(1-\beta)P_a L_{au}}}~~~B\treq{\frac{N_0 \delta_t}{\varepsilon \beta P_a L_{au} L_{ub}}}\] 
and $\mathrm{Q}(a,b)$ is the first-order Marqum Q-function, and $\mathrm{I}_{\nu}(\cdot)$ is the modified Bessel function of the first kind with $\nu$th order. Furthermore, $(a)$ follows from plugging the approximate expressions for $Q(\cdot, \cdot)$ and $I_{0}(\cdot)$ given respectively as \cite{Cao2016} 
\begin{align}
I_{0}(y)=\sum_{r=0}^{R}\frac{\Gamma(R+r)R^{1-2r}}{\Gamma^2(r+1)\Gamma(R-r+1)}\left(\frac{y}{2}\right)^{2r},
\end{align}
and
\begin{align}\label{Q_approx}
\mathrm Q(x,y)&=\sum_{d=0}^{D}\sum_{u=0}^{d}\frac{\Gamma(D+d)D^{1-2d}x^{2d}y^{2u} }{\Gamma(D-d+1)d!u!2^{d+u}}e^{-\frac{x^2+y^2}{2}},
\end{align}
Finally, applying the equation {\cite[Eq. (3.471.9)] {Gradshteyn2014}} for calculating the integral expression of the last equation of \eqref{prof_pc}, one can reach, after tedious manipulations, at \eqref{pc_close} and so the proof is done.

\section{Derivation of Secrecy Outage Probability}\label{Appendix B}
To obtain $P_{so}$ we need to calculate $\mathcal{L}_1$ and $\mathcal{L}_2$. Henceforth, in order to calculate $\mathcal{L}_1$, let define $S_{ae}\treq V$ and $S_{be} \treq W$ while rewriting \eqref{l1} as
\begin{align}\label{l1_app}
\mathcal{L}_1&=\Pr\left\{V<\frac{P_b L_{be}\delta_e }{P_a L_{ae}} W+ \frac{N_0 \delta_e}{P_a L_{ae}}\right\}\nonumber\\
&\hspace*{-5mm}=\E_{W}\left\{F_{V|W}\left(\frac{P_b L_{be}\delta_e }{P_a L_{ae}} w+ \frac{N_0 \delta_e}{P_a L_{ae}}\right)\right\}\nonumber\\
&\hspace*{-5mm}=1-\int_{0}^{\infty}\exp\left(-\bigg[\frac{P_b L_{be}\delta_e}{P_a L_{ae}}+1\bigg]w \hspace*{-1mm}-\hspace*{-1mm}\frac{N_0\delta_e}{P_a L_{ae}}
\hspace*{-1mm}\right)dw,
\end{align}
calculating the last integral results in \eqref{l1}. 

Now, we focus on obtaining an analytical expression for $\mathcal{L}_2$ as follows. 
By defining the auxiliary variables $a_1\treq \delta_e\frac{P_b L_{ub}}{P_a L_{au}}$, $a_2 \treq \delta_e\frac{N_0}{\varepsilon \beta P_a L_{au} L_{ue}}$, $a_3 \treq \delta_e\frac{(1-\beta+\zeta)N_0}{(1-\beta)P_aL_{au}}$, and letting $X \treq S_{au}$, $Y \treq S_{ub}$, and $Z \treq S_{ue}$,  one can rewrite $\mathcal{L}_2$ given in \eqref{l2} as
\begin{align}\label{ip_gammae2}
\mathcal{L}_2&=\Pr\{X \leq a_1 Y+a_2 Z^{-1}+a_3\}\nonumber\\
&=\E_{Z}\{\E_{Y|Z}\{F_{X|{Y,Z}}(a_1y+a_2z^{-1}+a_3)\}\nonumber\\
&=1-\int_{0}^{\infty}\Xi(z)f_{Z}(z)dz,
\end{align}
where $ \Xi(z) \treq \int_{0}^{\infty}\mathrm{Q}\left(\sqrt{a}, \sqrt{b y+c(z)}\right) f_{Y}(y)dy$ in which $a \treq 2K_{au}$, $b \treq 2(1+K_{au}) a_1$, $c(z) \treq 2(1+K_{au}) (a_2 z^{-1}+a_3)$. Then, $\Xi(z)$ is calculated in a closed-form expression  using \eqref{Q_approx} given in Appendix A as (see \eqref{xhi_closed}).
\begin{figure*}[t]
	\begin{align}\label{xhi_closed}
	\Xi(z)&=\sum_{d=0}^{D}\sum_{u=0}^{d}\sum_{r=0}^{u} \frac{(1+K_{ub})e^{-K_{ub}}\Gamma(D+d)D^{1-2d}a^{d}b^{r}{u \choose r} c^{u-r}\exp\left(-\frac{a+c}{2}\right) }{\Gamma(D-d+1)d!u!2^{d+u}}\nonumber\\
	&\hspace{20mm}\times\int_{0}^{\infty}y^r \exp\left(-\left(\frac{b}{2}+K_{ub}+1\right)y\right)\mathrm{I}_0\left(2\sqrt{K_{ub}(1+K_{ub})}\sqrt{y}\right)dy\nonumber\\
	&\stackrel{(b)}{=}\sum_{d=0}^{D}\sum_{u=0}^{d}\sum_{r=0}^{u} \frac{(1+K_{ub})e^{-K_{ub}}\Gamma(D+d)D^{1-2d}a^{d}(\frac{b}{\tilde{b}})^{r}{u \choose r}r! c^{u-r} }{\Gamma(D-d+1)d!u!2^{d+u}\tilde{c}\sqrt{\tilde{b}}}\nonumber\\
	&\hspace{20mm}\times\exp(-\frac{a+c}{2}+\frac{\tilde{c}^2}{2\tilde{b}}) \mathrm{M}_{-(r+\frac{1}{2}),0}(\frac{\tilde{c}^2}{\tilde{b}}),
	\end{align}
	\noindent\rule{\textwidth}{.5pt}\\
	\begin{align}\label{L2_closed}
	\mathcal{L}_2&=1-\sum_{d=0}^{D}\sum_{u=0}^{d}\sum_{r=0}^{u} \sum_{q=0}^{Q}\sum_{s=0}^{u-r}\frac{\Gamma(Q+q)Q^{1-2q}\Gamma(D+d)D^{1-2d}a^{d}(\frac{b}{\tilde{b}})^{r}{u \choose r}r! {{u-r} \choose s} b_3^{u-r-s} b_2^s c_1^q }{\Gamma(Q-q+1)\Gamma^2(q+1)\Gamma(D-d+1)d!u!2^{d+u-1}\tilde{c}\sqrt{\tilde{b}}}\nonumber\\
	&\hspace{20mm}\times (1+K_{ub})(1+K_{ue})e^{-(K_{ub}+K_{ue})} \exp(-\frac{a}{2}+\frac{\tilde{c}^2}{2\tilde{b}}-\frac{b_2}{2}) \left(\frac{b_3}{2b_1}\right)^{\frac{q+s+1}{2}}\nonumber\\
	&\hspace{20mm}\times\mathrm{M}_{-(r+\frac{1}{2}),0}\left(\frac{\tilde{c}^2}{\tilde{b}}\right)\mathrm{K}_{q+s+1}\left(\sqrt{\frac{b_1b_3}{2}}\right),
	\end{align}
	\noindent\rule{\textwidth}{.5pt}\\
\end{figure*}
in which $\tilde{b} \treq \frac{b}{2}+K_{ub}+1$, $\tilde{c} \treq \sqrt{{K_{ub}(1+K_{ub})}}$. Moreover, $(b)$ comes from using  {\cite[Eq. (6.643.2)] {Gradshteyn2014}} to calculate the integral term, wherein $\mathrm{M}(\cdot)$ is the Whittaker M-function. Next, in order to obtain a closed-form expression  for the single-form integral given by \eqref{ip_gammae2} we further calculate as (see \eqref{L2_closed}), where $c_1\treq K_{ue}(K_{ue}+1)$, $b_1 \treq K_{ue}+1$, $b_2 \treq 2(K_{ue}+1)a_3$, $b_3 \treq 2(K_{ue}+1) a_2$, and $\mathrm{K}_\nu(\cdot)$ denotes the modified Bessel function with the second kind and $\nu$-th order.
Note that the above equation achieved by applying {\cite[Eq. (3.478.4)] {Gradshteyn2014}}. Finally, through tedious mathematical manipulations we can achieve the closed-form expression for the SOP given by \eqref{l2}, and hence, the proof is done.

\section{Derivation of optimal $\lambda^\star$}\label{Appendix C}
Taking the auxiliary variables 
\[c_1 \treq \frac{\varepsilon \beta (1-\beta) P X Y}{\varepsilon \beta (1-\beta + \zeta) Y N_0 + (1-\beta) N_0}, c_2 \treq \frac{P V}{N_0}\]
\[c_3 \treq \frac{P W}{N_0}, c_4 \treq \frac{\varepsilon \beta (1-\beta) P X Z}{\varepsilon \beta (1-\beta + \zeta) Z N_0 + (1-\beta) N_0}\] \[c_5 \treq \frac{\varepsilon \beta (1-\beta) P Y Z}{\varepsilon \beta (1-\beta + \zeta) Z N_0 + (1-\beta) N_0}\] 
we rewrite the function $\Phi(\lambda)$, considering high Transmit SNR approximation, as
\begin{align}
\Phi(\lambda) \approx \frac{c_1c_3c_5\lambda(\lambda-1)^2}{b_2\lambda^2+b_1\lambda+b_0},
\end{align}
where $b_2 = (c_4-c_5)c_3+c_2c_5$, $b_1 = (2c_5-c_4)c_3-c_2c_5-c_2-c_4$, $b_0 = -c_3c_5$. Now, taking the first derivation of the function $\Phi$ w.r.t $\lambda$, i.e., $\Phi_1(\lambda) \treq \frac{d \Phi(\lambda)}{d \lambda}$ results a rational polynomial function with always-positive denominator and the numerator of forth-order, where it can readily be found that $0$ and $1$ are the roots of both $\Phi(\lambda)$ and its first derivative $\Phi_1(\lambda)$. Hence the remained two roots are the solutions of the second order polynomial given as
\begin{align}
(\nu-1)\lambda^2+2\lambda-1 = 0
\end{align}
where $\nu = \frac{c_2c_5+c_3c_4}{c_3c_5}$. As such, considering the feasible set of $0 < \lambda <1$, we have three cases as 
\begin{enumerate}
    \item 
    $\nu<1$, no optimum solution
    \item
     $\nu=1$,  $\lambda^\star=\frac{1}{2}$ 
    \item
    $\nu > 1$, $\lambda^\star=\frac{1}{1+\sqrt{\nu}}$
\end{enumerate}
Furthermore, note that it is easy to prove $\Phi_1(\lambda)$ holds positive values in $(0,\lambda^\star]$ and are negative in $(\lambda^\star,1)$ and the extremum determines the maximum of the $\Phi(\lambda)$ \cite{convexBoyd}, and hence the proof is complete.

\section{Derivation of Lower-bounded ASR}\label{Appendix D}

In order to obtain a closed-form lower-bound expression for $\E\{C_S\}$, one can write, following the Jensen's inequality, as \cite{Mamaghani2018} (see \eqref{c_lb}).
\begin{figure*}[t]
\begin{align}\label{c_lb}
\E\{C_S\} &\geq \frac{1}{2\ln 2}\bigg[\E\{\ln(1+\gamma_{A\mapsto B})\} -\E\{\ln(1+\gamma_E)\}\bigg]^+ \nonumber\\
&\geq \frac{1}{2\ln 2}\bigg[\ln\Big(1+\exp(\E\{\ln(\gamma_{A\mapsto B})\})\Big) -\ln\Big(1+\E\{\gamma^{(1)}_E\}+\E\{\gamma^{(2)}_E\}\Big)\bigg]^+,
\end{align}
\noindent\rule{\textwidth}{.5pt}
\end{figure*}
In \eqref{c_lb}, the term $\E\{\ln(\gamma_{A\mapsto B})\}$ is further calculated as
\begin{align}\label{expectAB}
\E\{\ln(\gamma_{A\mapsto B})\}&\geq\ln\left(\frac{(1-\beta)P_aL_{au}}{(1-\beta+\zeta)N_0}\right)+\E\{\ln(S_{au} S_{ub})\} \nonumber\\
&\hspace{-15mm} -\E\left\{\ln\left({S_{ub}}+ \frac{1-\beta}{\varepsilon\beta(1-\beta+\zeta)L_{ub}}
\right)\right\}\treq{T_1},
\end{align}
where $T_1$ can be analytically obtained using Lemma \ref{lem1} as given in \eqref{Eq:T1}. Letting $T_2\treq \E\{\gamma^{(1)}_E\} + \E\{\gamma^{(2)}_E\}$, the remained parts $ \E\{\gamma^{(1)}_E\}$ and $ \E\{\gamma^{(2)}_E\}$ in \eqref{c_lb} are derived as
\begin{align}
    \E\{\gamma^{(1)}_E\} &=\frac{P_a}{P_b}\frac{L_{ae}}{L_{be}}\exp\left(\frac{N_0}{P_bL_{be}}\right) \mathrm{E_1}\left(\frac{N_0}{P_bL_{be}}\right),
\end{align}
\begin{figure*}[t]
\begin{align}
    \E\{\gamma^{(2)}_E\} \stackrel{(b)}{\approx} \frac{\varepsilon\beta(1-\beta)P_a(\lambda_{au}+2)(\lambda_{ue}+2)}{\varepsilon\beta(1-\beta)P_{b}(\lambda_{bu}+2)(\lambda_{ue}+2)+\varepsilon\beta(1-\beta+\zeta)(\lambda_{ue}+2)N_0+(1-\beta)N_0},
\end{align}
\noindent\rule{\textwidth}{.5pt}
\end{figure*}
where $\mathrm{E}_1(\cdot)=\int_{1}^{\infty}e^{-tx}t^{-a}dx$  is the exponential integral, $(b)$  follows from the approximate given in \cite{Bjornson2013} along considering the tight lower bound for $E\left\{\frac{1}{X}\right\}$ given in \cite{Moser2008}. However that $E\left\{\frac{1}{X}\right\}$ can be readily obtained inasmuch as  it equals to the first derivative of $g_1(x)$, already given in \eqref{g1g2}, w.r.t $x$, we use this tight approximation for simplicity.

\bibliographystyle{IEEEtran}
\bibliography{MyPaper_References}

\end{document}